\theoremstyle{plain}
\newtheorem{theorem}{Theorem}[section]
\theoremstyle{definition}
\theoremstyle{remark}
\journal{Arxiv}
\begin{document}

\begin{frontmatter}

\title{Harmonic holes as the submodules of brain network and network dissimilarity}

\author[a]{Hyekyoung Lee}
\author[d]{Moo K. Chung} 
\author[a]{Hongyoon Choi}
\author[b]{Hyejin Kang}
\author[a]{Seunggyun Ha} 
\author[c]{Yu Kyeong Kim} 
\author[a,b]{Dong Soo Lee}

\address[a]{Seoul National University Hospital,} 
\address[b]{Seoul National University,}
\address[c]{SMG-SNU Boramae Medical Center, Seoul, Republic of Korea}
\address[d]{University of Wisconsin, Madison, WI, USA}

%
%
%

\begin{abstract} 
Persistent homology has been applied to brain network analysis for finding the shape of brain networks across multiple thresholds. 
In the persistent homology, the shape of networks is often 
quantified by  the sequence of $k$-dimensional holes and 
Betti numbers.
The Betti numbers are more widely used than holes themselves in topological brain network 
analysis. 
However, the holes show the local connectivity of networks, and they can be very informative features in analysis. 
In this study, we propose a new method 
of measuring 
network differences based on the dissimilarity measure of harmonic holes (HHs).
The HHs, which represent the substructure of brain networks, are extracted by the Hodge Laplacian of brain networks.
We also find the most contributed HHs to the network difference based on the HH dissimilarity.
We applied our proposed method to 
clustering the networks 
of 4 groups, normal control (NC), stable and progressive mild cognitive impairment (sMCI and pMCI), and Alzheimer's disease (AD). 
The results showed that 
the clustering performance of the proposed method was better than that of network distances based on only the global change of topology. 
\end{abstract} 

\end{frontmatter}

\section{Introduction}

Persistent homology has been widely applied to brain network analysis for finding the topology of networks in multiscale \cite{chung.2009.ipmi,lee.2012.ieeemi,singh.2008.jv,solo.2018.tmi} Since a `simplicial complex' is not a familiar term in brain network analysis, we refer to it as a `network' that is generally used. 
It quantifies the shape of brain networks by using $k$-dimensional holes and their cardinality, 
the $k$th Betti number \cite{carlsson.2005.ijsm,edelsbrunner.2008.cm}. 
A persistence diagram (PD) summarizes the change of Betti numbers during the filtration of networks by recording when and how 
holes appear and disappear during the filtration. 
The persistent homology also provides distances for distinguishing networks such as the bottleneck distance and kernel-based distances \cite{edelsbrunner.2008.cm,reininghaus.2015.cvpr}. 
Such distances mostly find network differences in their PDs.
The Betti numbers and PDs are more often 
used than holes themselves in network applications. 

Holes represent the submodule of brain networks.   
$0$-dimensional holes, i.e., connected components, modules or clusters have been widely studied for finding functional or structural submodules in a brain \cite{chung.2017.ipmi,lee.2012.ieeemi,sporns.2016.arp}. 
On the other hand, $1$-dimensional holes have been rarely used for brain network analysis \cite{chung.2018.arxiv,lee.2014.miccai,lee.2018.isbi,petri.2014.jrsi,sizemore.2018.jcn,sporns.2000.cc}. 
Most studies in brain network analysis do not use $2$- and higher order simplexes in networks since networks. Therefore, all cycles in a network are considered as $1$-dimensional holes.  
There are few network measures based on cycles in brain network analysis such as cycle probability and the change of the number of cycles during graph filtration \cite{chung.2018.arxiv,sporns.2000.cc}. 
These measures helped to compare the global property of networks 
but could not find the discriminative substructures of networks.

If 
higher order simplexes are introduced in a network, the number of $1$-dimensional holes is 
significantly reduced 
due to the removal of  filled-in triangles. 
The previous brain network studies that studied 
higher order simplexes mostly found holes based on Zomorodian and Carlsson's (ZC) algorithm \cite{petri.2014.jrsi,sizemore.2018.jcn,zomorodian.2005.dcg}. 
The ZC algorithm is very fast in linear-time, however, it finds the sparse representation of a hole that identifies only one path 
 around the hole and ignores the other paths.
This introduces an ambiguity in hole identification in practice. 
A better approach would be to localize the holes 
%
by the eigen-decomposition of Hodge Laplacian of a network. Such holes are called as the harmonic holes (HHs). 
The HH shows all possible paths around the hole with their weights \cite{friedman.1996.astc,horak.2013.am,lim.2015.arxiv}. 
The HHs have been applied to brain network analysis for localizing persistent holes \cite{lee.2018.isbi,lee.2014.miccai}. 
The $1$-dimensional holes in a network with higher order simplexes 
have at least one indirect path between every two nodes. 
Thus, the holes are related to the abnormality or inefficiency of the network. 
The previous studies found the persistent holes with long duration in a network as abnormal holes, and localized them by harmonic holes. 
Therefore, the duration of holes was used instead of HHs in network discrimination. 

In this paper, we propose a new 
measure for estimating network dissimilarity based on persistent HHs (HH dissimilarity). 
The proposed HH dissimilarity is motivated from the bottleneck distance. 
The bottleneck distance first estimates the correspondence between holes between networks 
that are represented by 
points in PDs, and then chooses the maximum among all the distances between the estimated pairs of holes \cite{cohen-steiner.2007.dcg}. 
The HH dissimilarity also estimates the correspondence between HHs of two different networks that are represented by real-valued eigenvectors, and takes the averaged dissimilarities of the estimated pairs of HHs. 
The advantage of HH dissimilarity is not only to measure the network differences but also to quantify a HH's  
contributions to the network differences. 
We  will call the amount  of HH's 
of contribution the {\em citation} of HH. 
This allows us to identify the discriminative subnetworks of networks. 

The proposed method is applied to metabolic brain networks obtained from 
the FDG PET dataset  in Alzheimer's disease neuroimaging initiative (ADNI). 
The dataset consists of 4 groups: 
normal controls (NC), stable and progressive mild cognitive impairment (sMCI and pMCI), and Alzheimer's disease (AD). 
We generated 2400 networks by bootstrap, 
and compared the clustering performance with the existing network distances such as L$_{2}$-norm (L2) of the difference between distance matrices, Gromov-Hausdorff (GH) distance, Kolmogorov-Smirnov (KS) distance of connected components and cycles (KS$_0$ and KS$_1$), and bottleneck distance of holes \cite{carlsson.2008.ijcv,chung.2017.ipmi,chung.2018.arxiv,cohen-steiner.2007.dcg,lee.2012.ieeemi}.  
The results showed that the HH dissimilarity had the superior clustering performance than the other distance measures, and comparing local connectivities could be more helpful to discriminating the progression of Alzheimer's disease.

\section{Materials and methods} 

\subsection{Data sets, preprocessing, and the construction of metabolic connectivity}
\label{sec:dataset}
 
We used FDG PET images in ADNI data set (http://adni.loni.usc.edu). 
The ADNI FDG-PET dataset consists of 4 groups: 181 NC, 91 sMCI, 77 pMCI, and 135 AD (Age: $73.7 \pm 5.9$, range $56.1 \sim 90.1$). 
FDG PET images were measured 30 to 60 minutes and they were averaged over all frames. 
The voxel size in the images were standardized in 
$1.5 \times 1.5 \times 1.5$ mm resolution.
The images were spatiallly normalizd to Montreal Neurological Institute (MNI) space using statistical parametric mapping (SPM8, www.fil.ion.ucl.ac.uk/spm). 
The details of data sets and preprocessing are given in \cite{choi.2018.bbr}. 
The whole brain image was parcellated into 
94 regions of interest (ROIs) based on automated anatomical labeling (AAL2) excluding  cerebellum  \cite{rolls.2015.ni}. 
The 94 ROIs served as network nodes and their measurements were obtained by averaging FDG uptakes in the ROI. The averaged FDG uptake was globally normalized by the sum of 94 averaged FDG uptakes. 
The distance between 2 nodes was estimated by the diffusion distance on positive correlation between the measurements. 
The diffusion distance considers an average distance of all direct and indirect paths between 2 nodes via random walks \cite{coifman.2005.pnas}. 
The diffusion distance is known to be more robust to noise and outliers.

\subsection{Harmonic holes} 

\subsubsection{Simplicial complex}
The algebraic topology extends the concept of a graph further to a simplicial complex. 
Suppose that a non-empty node set $V$ is given. 
If the set of all subsets of $V$ is denoted by $2^{V},$ an abstract simplicial complex $K$ is a subset of $2^{V}$ such that (1) $\emptyset \in K$, and (2) if $\sigma \in K$ and $\tau \in \sigma$, $\tau \in K$ \cite{edelsbrunner.2008.cm,edelsbrunner.2009.book}. 
Each $\sigma \in K$ is called a simplex. 
A $i$-dimensional simplex is an element with $i+1$ nodes, $v_{1},...,v_{i+1} \in V$, denoted by $\sigma_{i}=[v_{1},...,v_{i+1}]$.  
The dimension of $K$, denoted as $\mbox{dim} K$, is the maximum dimension of a simplex $\sigma \in K$.
The collection of $\sigma_i$'s in $K$ is denoted by $K_{i}$ $(-1 \le i \le \mbox{dim} K)$.  
The number of simplices in $K_{i}$ is denoted as 
$|K_{i}|$. 
The $i$-skeleton of $K$ is defined as 
$K^{(i)} = K_{0} \cup \cdots \cup K_{i}$ $(0 \le i \le \mbox{dim} K)$.  
Thus, a graph with nodes and edges is $1$-skeleton $K^{(1)}$.  
In this paper, we will only consider $2$-skeleton $K^{(2)}$ of a simplicial complex that includes nodes, edges, and triangles. 
For convenience, we call it \emph{a (simplicial) network} \cite{kim.2018.lma}. 

\subsubsection{Incidence matrix} 

We denote a $|K_{i}|$-dimensional integer space as $\Integer^{|K_{i}|}.$ 
Given a finite simplicial complex $K,$ 
       a chain complex $C_{i}$ is defined in $\Integer^{|K_{i}|}$   \cite{edelsbrunner.2008.cm,zomorodian.2005.dcg}.  
The boundary operator $\partial_{i}$ and coboundary operator $\partial_{i}^{\top}$ for $i = 1, \dots, N$ $(N>0)$ are functions such that $\partial_{i}: C_{i} \rightarrow C_{i-1}$ and $\partial_{i}^{\top}: C_{i-1} \rightarrow C_{i}$, respectively. 
We define $\partial_{i} = 0$ for $i < 1$ or $i > N$.

Given $\sigma_{i} = [v_{1},...,v_{i+1}] \in C_{i}$, the boundary of $\sigma_{i}$ is algebraically defined as 
$$ \partial_{i} \sigma_{i} = \sum_{j=1}^{i+1} (-1)^{j-1} [v_{1}, \dots, v_{j-1}, v_{j+1}, \dots, v_{i+1}].$$ 
If the sign of $\sigma_{i-1}$ in $\partial_{i} \sigma_{i}$ is positive/negative, it is called positively/negatively oriented with respect to $\sigma_{i}.$
We denote the positive/negative orientation by $\sigma_{i-1} \in_{+/-} \sigma_{i}$.   
The boundary of the boundary is always zero, i.e., $\partial_{i-1} \partial_{i} = 0$.

If the simplicial complex $K$ has 
$$K_{i} = \left\{ \sigma_{i}^{1}, \cdots, \sigma_{i}^{|K_{i}|} \right\},  \quad K_{i-1} = \left\{ \sigma_{i-1}^{1}, \cdots ,\sigma_{i-1}^{|K_{i-1}|} \right\},$$
 the boundary operator $\partial_{i}$ is represented by the $i$th incidence matrix $\bM_{i} \in \Integer^{|K_{i-1}| \times |K_{i}|}$ such that  \cite{friedman.1996.astc,horak.2013.am,lim.2015.arxiv} 
\be
[\bM_{i}]_{mn} = \left\{ \begin{array}{rl}
1 & \mbox{if } \sigma_{i-1}^{m} \in_{+} \sigma_{j}^{n}, \\
-1 & \mbox{if } \sigma_{i-1}^{m} \in_{-} \sigma_{j}^{n}, \\
0 & \mbox{otherwise.} 
\end{array} \right. 
\label{eq:incidence}
\ee
The coboundary operator $\partial_{i}^{\top}$ is represented by $\bM_{i}^{\top}$. 
$\sigma_{i}^{n}$ in $K_{i}$ is represented by a vector in $\Integer^{|K_{i}|}$ in which the $n$th entry is 1 and the rest is 0. 
The linear combination of $\sigma_{i}$'s can be represented by the linear combination of $|K_{i}|$-dimensional vectors.   

\subsubsection{Combinatorial Hodge Laplacian} 

A combinatorial Hodge Laplacian $\bL_{i}: C_{i} \rightarrow C_{i}$ is defined by 
\be 
\label{eq:laplacian}
\bL_{i} = \bL_{i}^{up} + \bL_{i}^{down} = \bM_{i+1} \bM_{i+1}^{\top} + \bM_{i}^{\top} \bM_{i},
\ee  
where $\bL_{i}^{up} \in \Integer^{|K_{i}| \times |K_{i}|}$ and $\bL_{i}^{down} \in \Integer^{|K_{i}| \times |K_{i}|}$ are composite functions $\partial_{i+1} \partial_{i+1}^{\top}: C_{i} \rightarrow C_{i+1} \rightarrow C_{i}$ and $\partial_{i}^{\top} \partial_{i}: C_{i} \rightarrow C_{i-1} \rightarrow C_{i}$, respectively \cite{friedman.1996.astc,horak.2013.am,kim.2018.lma,lim.2015.arxiv}
The kernel and image of $\bL_{i}$ are denoted by $\mbox{ker} \bL_{i}$ and $\mbox{img} \bL_{i},$ respectively. 
The $\mbox{ker} \bL_{i}$ is called harmonic classes $H_{i}$ \cite{kim.2018.lma}.

The $i$th homology and cohomology groups of $C = \left\{ C_{i}, \partial_{i} \right\}$ are defined respectively by 
$$\tilde{H}_{i} (C) = \mbox{ker} \partial_{i} / \mbox{img} \partial_{i+1} \mbox{ and }  \tilde{H}^{i} (C) = \mbox{ker} \partial_{i+1}^{\top} / \mbox{img} \partial_{i}^{\top}. $$

\begin{theorem}[Combinatorial Hodge Theory \cite{friedman.1996.astc,kim.2018.lma,lim.2015.arxiv}] 
Suppose that a chain complex $\left\{ C_{i}(X; \Real), \partial_{i} \right\}$ is given for $i = 0, \dots, N$, and $C_{i}(X; \Real)$ is considered as an $\Real$-vector space. 
Harmonic classes $H_{i}$ obtained by the combinatorial Laplacian $\bL_{i}$ are congruent to the $i$th homology and cohomology groups, $\tilde{H}_{i}$ and $\tilde{H}^{i}$ of $C$, i.e., 
$$H_{i} \cong \tilde{H}_{i} (C; \Real) \cong \tilde{H}^{i} (C; \Real).$$
\end{theorem}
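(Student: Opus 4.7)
The plan is to establish the classical Hodge decomposition of each chain space $C_i(X;\Real)$ into three mutually orthogonal subspaces and then read off the two isomorphisms by identifying cycles modulo boundaries with the harmonic summand.

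First, I would exploit the adjoint relationship $\langle \partial_{i+1} y, z\rangle = \langle y, \partial_{i+1}^{\top} z\rangle$, which is automatic once we view each $C_i(X;\Real)$ as a finite-dimensional real inner product space with the standard basis given by $K_i$ and the incidence matrix $\bM_{i}$ as matrix representative. Using $\partial_{i}\partial_{i+1}=0$ and its transpose $\partial_{i+1}^{\top}\partial_{i}^{\top}=0$, a one-line computation gives $\langle \partial_{i+1}y,\,\partial_{i}^{\top}w\rangle = \langle \partial_{i}\partial_{i+1}y,\,w\rangle = 0$, so $\mbox{img}\,\partial_{i+1}\perp \mbox{img}\,\partial_{i}^{\top}$. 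Next, for $u \in \mbox{ker}\,\bL_i$, the identity $\langle \bL_i u, u\rangle = \|\partial_{i+1}^{\top}u\|^2 + \|\partial_i u\|^2 = 0$ forces $u \in \mbox{ker}\,\partial_i\cap\mbox{ker}\,\partial_{i+1}^{\top}$, which in turn yields $\mbox{ker}\,\bL_i \perp \mbox{img}\,\partial_{i+1}$ and $\mbox{ker}\,\bL_i \perp \mbox{img}\,\partial_{i}^{\top}$ via the same adjoint manipulation.

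Second, I would assemble the Hodge decomposition
\[
C_i(X;\Real) \;=\; \mbox{ker}\,\bL_i \;\oplus\; \mbox{img}\,\partial_{i+1} \;\oplus\; \mbox{img}\,\partial_{i}^{\top}.
\]
Orthogonality was just checked; for the sum being all of $C_i$ I would use $C_i = \mbox{ker}\,\bL_i \oplus \mbox{img}\,\bL_i$ (valid because $\bL_i$ is symmetric, hence diagonalizable with real spectrum), together with the trivial containment $\mbox{img}\,\bL_i \subseteq \mbox{img}\,\partial_{i+1} + \mbox{img}\,\partial_{i}^{\top}$ coming from the very definition $\bL_i = \partial_{i+1}\partial_{i+1}^{\top} + \partial_{i}^{\top}\partial_{i}$.

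Third, with the decomposition in hand I would extract the two isomorphisms. Since $\mbox{ker}\,\partial_i = (\mbox{img}\,\partial_{i}^{\top})^{\perp}$ in $C_i$, the Hodge decomposition restricts to
\[
\mbox{ker}\,\partial_i \;=\; \mbox{ker}\,\bL_i \;\oplus\; \mbox{img}\,\partial_{i+1},
\]
so quotienting by $\mbox{img}\,\partial_{i+1}$ produces $\tilde{H}_i(C;\Real) \cong \mbox{ker}\,\bL_i = H_i$. Symmetrically, $\mbox{ker}\,\partial_{i+1}^{\top} = (\mbox{img}\,\partial_{i+1})^{\perp} = \mbox{ker}\,\bL_i \oplus \mbox{img}\,\partial_{i}^{\top}$, giving $\tilde{H}^{i}(C;\Real) \cong \mbox{ker}\,\bL_i = H_i$, and the three-way congruence follows.

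The only genuinely delicate step is the spanning half of the Hodge decomposition: I need that $\mbox{img}\,\bL_i$ really fills out $(\mbox{ker}\,\bL_i)^{\perp}$, which hinges on $\bL_i$ being self-adjoint (equivalently, on working over $\Real$ rather than $\Integer$, so that the spectral theorem applies and $\bL_i$ has no generalized-eigenvector pathology). Once self-adjointness is invoked, the remaining identifications are formal manipulations with kernels, images, and orthogonal complements.
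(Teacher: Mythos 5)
Your proof is correct, but it takes a genuinely different and more complete route than the paper's. The paper's entire proof is a one-line dimension count: $\mbox{rank}\, H_{i} = \mbox{rank}\, C_{i} - \mbox{rank}\, \bL_{i} = \mbox{rank}\, C_{i} - (\mbox{rank}\, \partial_{i} + \mbox{rank}\, \partial_{i+1}) = \mbox{rank}\, \tilde{H}_{i}(C;\Real)$, which establishes the congruence purely by matching dimensions of finite-dimensional real vector spaces. That argument is terse but leans on the unproved identity $\mbox{rank}\, \bL_{i} = \mbox{rank}\, \partial_{i} + \mbox{rank}\, \partial_{i+1}$ --- an identity whose justification is essentially the orthogonality package you assemble explicitly ($\mbox{ker}\, \bL_{i} = \mbox{ker}\, \partial_{i} \cap \mbox{ker}\, \partial_{i+1}^{\top}$ together with $\mbox{img}\, \partial_{i+1} \perp \mbox{img}\, \partial_{i}^{\top}$). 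Your route via the full Hodge decomposition $C_{i} = \mbox{ker}\, \bL_{i} \oplus \mbox{img}\, \partial_{i+1} \oplus \mbox{img}\, \partial_{i}^{\top}$ costs more work but buys more: it produces a canonical isomorphism in which every homology class has a unique harmonic representative (the fact the paper actually relies on downstream, when it extracts harmonic holes as zero-eigenvectors of $\bL_{1}$), rather than a mere equality of Betti numbers. Your flagging of self-adjointness over $\Real$ as the crux is also apt --- it is precisely why the theorem statement insists on $\Real$-coefficients rather than the integer chain groups used elsewhere in the paper. One tiny streamlining: the spanning step does not really need the spectral theorem; for any real matrix $A$ one has $C_{i} = \mbox{ker}\, A \oplus \mbox{img}\, A^{\top}$, and applying this to $A = \bL_{i} = \bL_{i}^{\top}$ gives $C_{i} = \mbox{ker}\, \bL_{i} \oplus \mbox{img}\, \bL_{i}$ directly.
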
  
\begin{proof} 
$\mbox{rank} H_{i} = \mbox{rank} C_{i} - \mbox{rank} \bL_{i} = \mbox{rank} C_{i} - (\mbox{rank} \partial_{i} + \mbox{rank} \partial_{i+1}) = \mbox{rank}  \tilde{H}_{i} (C; \Real).$ 
\end{proof} 


The harmonic classes $H_{i}=\mbox{ker}  \bL_{k}$ is also called a harmonic space \cite{kim.2018.lma}. 
The homology group $\tilde{H}_{i}$ in persistent homology can be replaced with a harmonic space $H_{i}$, and the rank of $H_{i}$ is the same as the $i$th Betti number. 
We call a hole in $H_{i}$  a harmonic hole (HH), and a hole in $\tilde{H}_{i}$ estimated by Smith normal form 
a binary hole \cite{zomorodian.2005.dcg}.

Given a simplicial network with $p$ nodes, $q$ edges, and $r$ filled-in triangles, we estimate $\bL_{1} \in \Integer^{q \times q}$ in (\ref{eq:laplacian}), and $H_{i} = \left\{ \bx \in \Real^{q \times 1} | \bL_{1} \bx = 0 \right\}$. 
The eigenvector of $\bL_{1}$ with zero eigenvalue, $\bx \in \Real^{q \times 1}$ represents a HH. 
The entry of $\bx$ can be positive or negative depending on the orientation of edges in the hole. 
The absolute value of the entry of $\bx$ represents the weight of the corresponding edge in the hole.  
Since $\bx$ and $-\bx$ have zero eigenvalue, they represent the same hole, and $\parallel \bx \parallel = 1.$

\subsubsection{Computing persistent HHs}

In this study, we have the distances between pairs of nodes in a brain network. 
Given a set of nodes and their distances, Rips complex with threshold $\epsilon$ is the clique complex induced by a set of edges with their distances less than $\epsilon$. 
Rips filtration is the nested sequence of Rips complexes obtained by increasing threshold $\epsilon$. 
To compute persistent holes over threshold,  we perform Rips filtration on brain network nodes 
\cite{carlsson.2005.ijsm,edelsbrunner.2008.cm}. 

Zomorodian and Carlsson developed an efficient algorithm for computing persistent holes based on the Smith normal form 
\cite{zomorodian.2005.dcg}.  
It is an incremental algorithm that updates the range and null spaces of incidence matrices during Rips filtration. 
The representation of a persistent binary hole is changed by adding simplexes during Rips filtration. 
The ZC algorithm chose the youngest binary hole at the birth 
of the persistent hole. 
The ZC algorithm is fast in practically linear-time, 
however,
the obtained binary hole shows only one path around the  
hole and the other paths are ignored. 
On the other hand, a HH shows all possible paths around the persistent hole, and represents the contribution of a path to the generation of the hole by edge weights in the path.  
Thus, the HH is better in localizing 
a persistent hole than a binary hole when we want to extract local connectivity in a brain network. 
However, there is no algorithm for estimating persistent HHs during the filtration in literature.

In this study, we will estimate the youngest persistent HHs just like the ZC algorithm. 
First, we sort edges $e_{1}, \dots, e_{q}$ in the ascending order of an edge distance, and perform the Rips filtration by the fast ZC algorithm. 
To avoid having the same edge distance, we select the ordered index $1, \dots, q$ as the filtration value, instead of the edge distance. 
The reason for performing the ZC algorithm first is that the computation of eigen-decomposition at every filtration value is too expensive. 
Then, we obtain a PD which is the set of the birth and death thresholds of persistent holes.
If a persistent hole appears at $i_{X}$ and disappears at $i_{Z},$ we perform the eigen-decomposition of Hodge Laplacian at $i_{X}, i_{Z},$ and $i_{Y}=i_{Z}-1$ to estimate the corresponding HH. 
The $i_{Y}$ is the threshold just before the death of the persistent hole.

The harmonic spaces at $i_{X}, i_{Y},$ and $i_{Z}$ are written by matrices 
$$H_{X} = [\bx_{1}, \cdots, \bx_{l}] \in \Real^{q \times l}, H_{Y} = [\by_{1}, \cdots, \by_{m}] \in \Real^{q \times m},  H_{Z} = [\bz_{1}, \cdots, \bz_{n}] \in \Real^{q \times n},$$ respectively.
The HH appearing at $i_{X}$ and disappearing at $i_{Z}$ will be in $H_{X}$ and $H_{Y},$ but not in $H_{Z}.$   
We find which $\by \in H_{Y}$ does not depend on $\bz_{i}$'s in $H_{Z}.$
If $\by \in H_{Y}$ depends on $H_{Z},$ the smallest singular value of the matrix $[H_{Z}, \by]$ is close to $0.$  It implies 
that $\by$ still exists in $H_{Z}.$ 
Therefore, we choose $y \in H_{Y}$ such that 
\be
\label{eq:1} 
\by = \argmax_{y \in H_{Y}} \left\{ \mbox{the smallest singular value of }[H_{Z}, \by] \right\}.
\ee 
The chosen $\by$ by (\ref{eq:1}) is the oldest persistent HH. 
Next, we choose the youngest persistent HH $\bx \in H_{X}$ such that  
\be
\label{eq:reason} 
\bx = \argmin_{x \in H_{X}} \left\{ \mbox{the smallest singular value of }[\bx, \by] \right\} = \argmin_{x \in H_{X}} \left\{ 1 - | \bx^{\top} \by |\right\}.
\ee

This procedure is repeated for all persistent holes. 
The incidence matrices are already estimated during the ZC algorithm. 
Since the incidence matrices and their combinatorial Hodge Laplacian are very sparse, the computation of persistent HHs is not so hard in our experiments. 
In our experiments, the total number of persistent holes during the filtration is not more than $50,$ and the number of persistent holes at each filtration value is not more than $20$.

\subsection{HH dissimilarity}

\subsubsection{Bottleneck distance} 

If $K_{a}$ and $K_{b}$ have $m$ and $n$ persistent holes. 
The PDs of $K_{a}$ and $K_{b}$ are denoted respectively by $PD_{a} = \left\{ \bt_{1}^{a}, \cdots, \bt_{m}^{a} \right\}$ and $PD_{b} = \left\{ \bt_{1}^{b}, \cdots, \bt_{n}^{b} \right\},$ where $\bt_{i}$ is a point with the birth and death thresholds of the corresponding hole.
Bottleneck distance between two simplicial complexes, $K_{a}$ and $K_{b}$ is defined by \cite{cohen-steiner.2007.dcg}
$$ D_{B} (K_{a},K_{b}) = d(PD_{a},PD_{b}) = \inf_{\eta: PD_{a} \rightarrow PD_{b}} \sup_{\boldmath{t} \in PD_{a}} \parallel \bt - \eta(\bt) \parallel_{\infty}, $$ where $\eta$ is a bijection from $PD_{a}$ to $PD_{b}$ and $\parallel (x, y) \parallel_{\infty} = \max \left\{ |x|, |y| \right\}$ is the $L_{\infty}-$norm. 
If there is no corresponding hole in the other PD because of $m \neq n,$ 
the points on the diagonal line $x=y$ that have the shortest distance from the point $\bt$ are included.
In this way, the bottleneck distance measures network distance by the difference of the birth and death thresholds of holes, not by the difference between holes themselves.

\subsubsection{Dissimilarity between HHs} 

If the eigenvectors with zero eigenvalues of two different combinatorial Laplacians are denoted by $\bx$ and $\by$, their dissimilarity is defined by one minus the absolute value of their inner product, i.e., 
\be
\label{eq:sim} 
d_{h}(\bx,\by) = 1 - | \bx^{\top} \by |.
\ee 
This is the smallest singular value of the matrix $[\bx, \by]$ in (\ref{eq:reason}) that shows the dependency between $\bx$ and $\by.$ 
If $\bx$ and $\by$ are similar, their dissimilarity is close to 0; otherwise, it is close to 1.

\subsubsection{HH dissimilarity} 

Suppose that two networks $K_{a}$ and $K_{b}$ have $m$ and $n$ persistent HHs, denoted by $\bH_{a} = \left[ \bx_{1}^{a}, \cdots, \bx_{m}^{a} \right]$ and $\bH_{b} = \left[ \bx_{1}^{b}, \cdots, \bx_{n}^{b} \right],$ respectively.
The dissimilarity based on persistent HHs (HH dissimilarity) is defined by 
\be
\label{eq:HH}
D_{H} (K_{a}, K_{b}) = d(\bH_{a},\bH_{b}) = \inf_{\zeta: \boldmath{H}_{a} \rightarrow \boldmath{H}_{b}} \frac{1}{\min(m,n)} \sum_{\bx \in \boldmath{H}_{a}} d_{h}(\bx,\zeta(\bx)), 
\ee
where $\zeta$ is a bijection from $\bH_{a}$ to $\bH_{b}.$  

The correspondence $\zeta$ between persistent HHs in two different networks is determined by minimizing the total distances between the pairs of HHs based on Munkres assignment algorithm, also known as Hungarian algorithm. Some of persistent HHs can not find their corresponding HHs in the other network because of $m \neq n.$ In this study, we ignore them and average the dissimilarities of the obtained pairs of persistent HHs.

\subsubsection{Citation of HH} 
The advantage of using HH dissimilarity is the ability to quantify how much a persistent HH contributes in differentiating 
networks. 
The degree of the contribution of HH is called the citation of HH.
If a persistent HH $\bx$ in $\bH_{a}$ corresponds to a persistent HH $\by = \zeta(\bx)$ in $\bH_{b}$ in (\ref{eq:HH}), their dissimilarity is $d_{h}(\bx,\by) = 1 - | \bx^{\top} \by|,$ and their similarity is defined by $| \bx^{\top} \by|.$
If the persistent HHs of $l$ networks are denoted by $\mathcal{H} = \left\{ \bH_{1}, \cdots, \bH_{l} \right\}$ and they are compared with $\bH_{a},$ the citation of $\bx$ is defined by 
$$\sum_{\zeta(\bx) \in \boldmath{H}, \forall \boldmath{H} \in \mathcal{H}}  | \bx^{\top} \zeta(\bx)|.$$
If we find the most cited HHs by comparing networks within a group, we can determine 
which submodule makes two networks in a group close to each other. 
Furthermore, if we find the most cited HHs by comparing network between groups, we can determine which submodule makes differences. 

\section{Results} 

\subsection{Brain network construction} 

We had 4 groups, NC, sMCI, pMCI, and AD which had 181, 91, 77 and 135 subjects, respectively. 
The subjects in a group could be heterogeneous. 
Thus, we obtained 600 bootstrap samples from each group by randomly selecting the subset of the number of subjects in each group with replacement \cite{sanabria-diaz.2013.plosone}.
The number of bootstrap samples was heuristically determined in comparison with previous study \cite{sanabria-diaz.2013.plosone}.
We constructed 600 bootstrapped networks from bootstrap samples in each group by diffusion distance in Sec. \ref{sec:dataset}. 
The total number of generated brain networks was 2400.
%

\subsection{Network clustering} 

We clustered 2400 bootstrapped brain networks into 4 groups by Ward's hierarchical clustering method. 
The Ward's hierarchical clustering method found the group labels based on the distance between data points, which is a network in our application. 
The network distance was estimated by (a) L2, (b) GH distance, (c) KS$_0,$ (d) KS$_1$, (e) bottleneck distance of holes, and (f) HH dissimilarity \cite{carlsson.2008.ijcv,chung.2017.ipmi,chung.2018.arxiv,cohen-steiner.2007.dcg,lee.2012.ieeemi}.
The obtained distance matrices of 2400 networks were shown in Fig. \ref{fig:networkdistance}. 
After clustering networks, we matched the estimated group label with the true group label of networks and calculated the clustering accuracy of 8 distance matrices.
The clustering accuracy of 8 distance matrices was shown in Table \ref{table:acc}. 
We also clustered 1200 bootstrapped networks in sMCI and pMCI into 2 groups by the same way. 
The clustering accuracy was shown in Table \ref{table:acc}. 

\begin{figure}[t]
\centering
\includegraphics[width=1\linewidth]{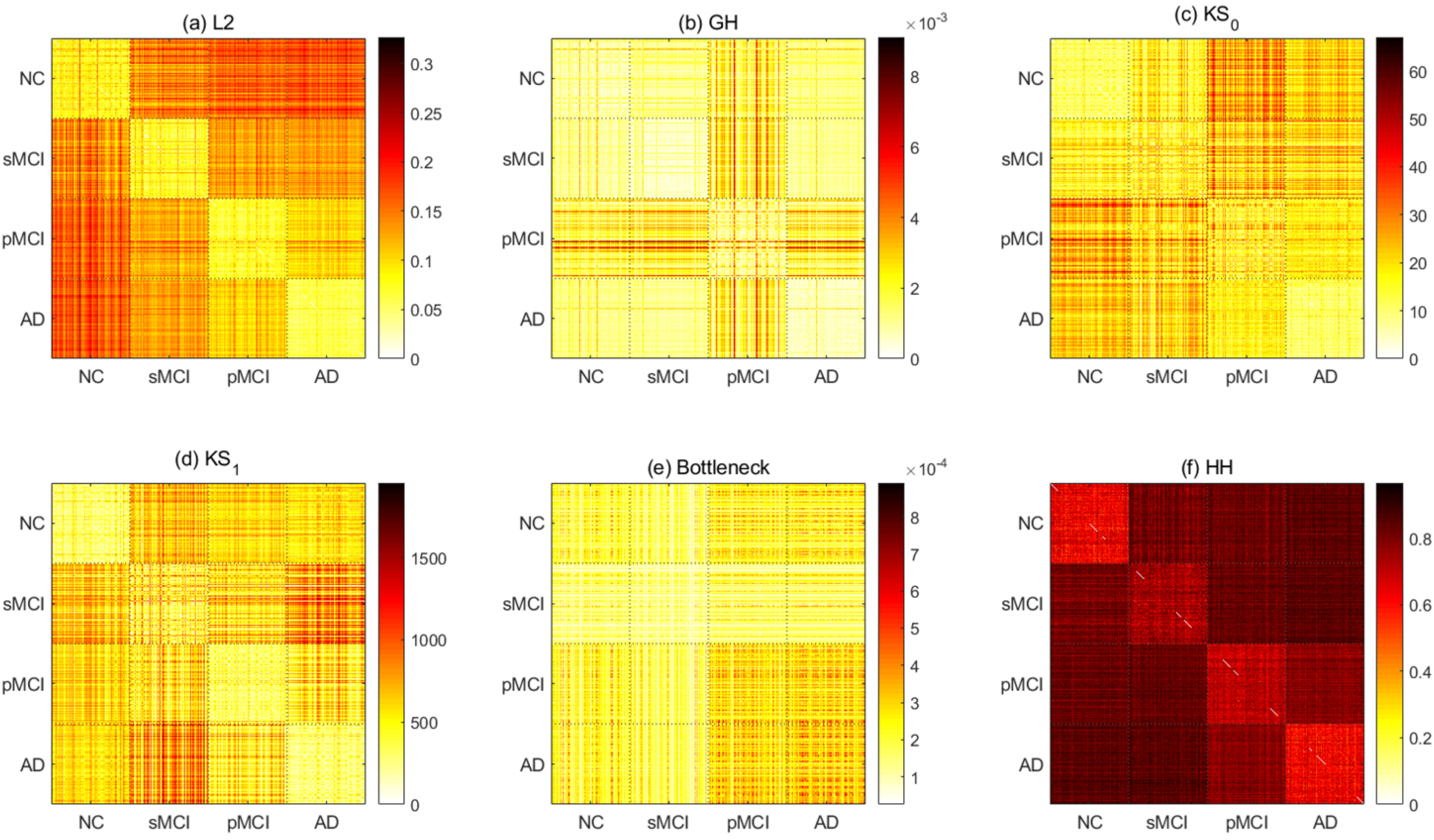}
\vspace*{-0.5cm}
\caption{Distance of 2400 networks. (a) L2, (b) GH, (c) KS$_{0},$ (d) KS$_{1},$ (e) Bottleneck, and (f) HH. The 2400 networks were sorted in the order of NC, sMCI, pMCI, and AD. Each group had 600 networks. The clustering accuracy is shown in Table \ref{table:acc}.}
\label{fig:networkdistance}
\end{figure}

\begin{table}[b]
\caption{Clustering accuracy}  
\center{
\begin{tabular}{| l c | c | c |}
\hline
& \multirow{2}{*}{Distance} & 4 groups  & 2 groups  \\ 
& & (NC, sMCI, pMCI, and AD) & (sMCI and pMCI) \\
\hline 
(a) & L2 & 66.09 \% & 98.50 \% \\
(b) & GH & 45.96 \% & 87.58 \% \\
(c) & KS$_0$ & 52.54 \% & 74.00 \% \\
(d) & KS$_1$ & 77.38 \% &79.83 \% \\
(e) & Bottleneck & 45.71 \% & 76.58 \% \\
(f) & HH & 100 \% &100 \% \\
\hline
\end{tabular}}
\label{table:acc}
\end{table}

\subsection{The most cited HHs} 

We selected the 600 most cited HHs within NC, sMCI, pMCI, and AD, and divided them into 5 clusters based on the dissimilarity between HHs in (\ref{eq:sim}). 
In Fig. \ref{fig:subnetworkswithingroup} (a-d), because the dissimilarity of HHs in the cluster 5 was large, we considered HHs in the cluster 5 as outliers. 
We calculated the center of HHs in clusters 1, 2, 3, and 4, by selecting the HH with the minimum sum of dissimilarities with the other HHs in the cluster. 
The 4 representative HHs of 4 clusters were shown on the left of Fig. \ref{fig:subnetworkswithingroup} (a-d). 
In each panel, the upper row showed the HHs in a brain, and the lower row showed the HHs in a 2-dimensional plane. 
The location of nodes in the 2-dimensional plane was estimated by Kamada-Kawai algorithm implemented in a network analysis/visualization toolbox, Pajek \cite{batagelj.2003.gds}. 
In Fig. \ref{fig:subnetworkswithingroup} (a-d), the width of an edge was proportional to the edge weight in the HH. 
The larger the weight of an edge, the darker the color of an edge.  
The color of nodes represented the location of nodes in a brain. 
If a node was located in frontal, parietal, temporal, occipital, subcortical, and limbic regions, the color of the node was red, blue, green, purple, yellow, and orange, respectively. 

We also selected the 600 most cited HHs when we compared networks between sMCI and pMCI, and divided them into 5 clusters. 
In Fig. \ref{fig:subnetworksbetweengroups} (a), the cluster 5 contained the outliers. 
Thus, we estimated the center HHs in cluster 1-4. 
The representative HHs in sMCI and the corresponding holes in pMCI were shown in Fig. \ref{fig:subnetworksbetweengroups} (b).

\begin{figure}[t]
\centering
\includegraphics[width=1\linewidth]{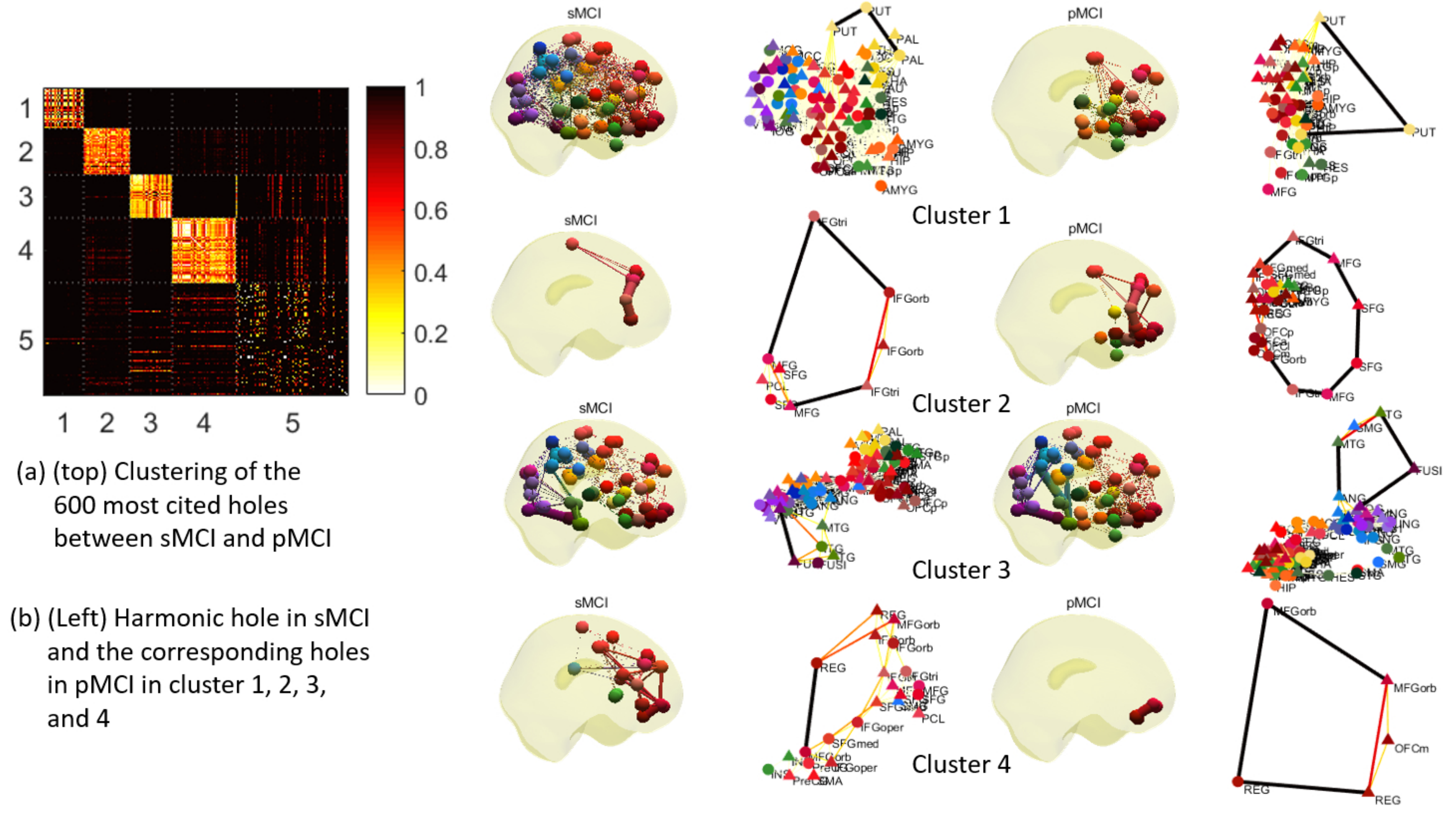}
\vspace*{-0.7cm}
\caption{(a) Clustering of the 600 most cited HHs when sMCI and pMCI were compared. (b) Representative HHs in cluster 1, 2, 3 and 4. The left two columns showed HHs in sMCI and the right two columns showed the corresponding HHs in pMCI. Each HH was visualized in a brain and in a 2-dimensional plane. The shape of the HH was more clearly shown in the plane, and the location of the HH could be checked in the brain. The color of nodes was determined by the location of nodes in a brain:  frontal (red), parietal (blue), temporal (green), occipital (purple), subcortical (yellow), and limbic (orange) regions. If the edge weight was larger in a HH, the color of edge was darker and the width of edge was larger.}
\label{fig:subnetworksbetweengroups}
\end{figure}

\begin{figure}[t]
\centering
\includegraphics[width=1\linewidth]{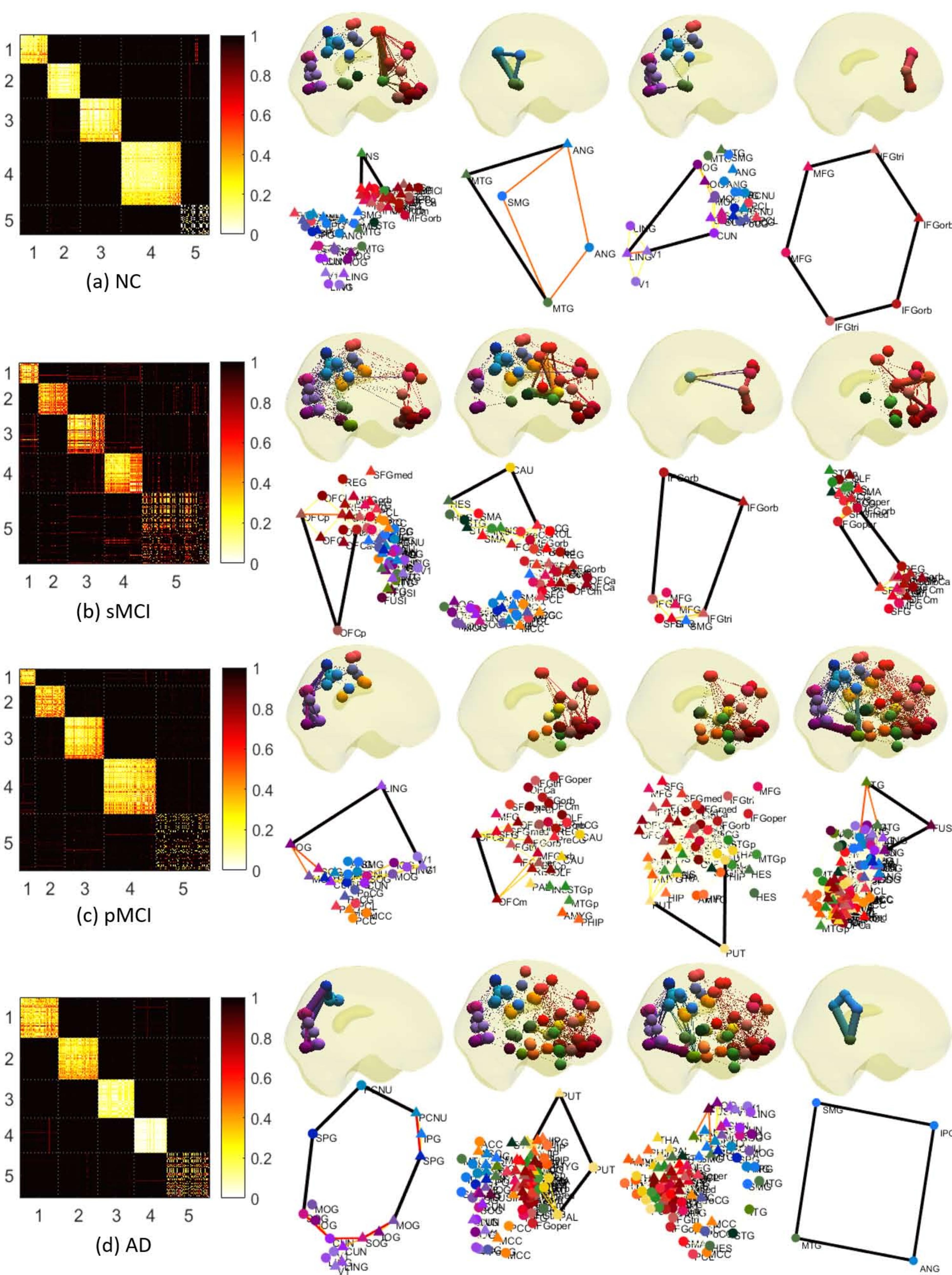}
\caption{Distance matrix of the 600 most cited HHs within (a) NC, (b) sMCI, (c) pMCI, and (d) AD. The most cited holes were clustered into 5 groups. The last cluster 5 had outliers with large dissimilarities between HHs. The representative HHs of the first 4 clusters were plotted on the right. The upper row showed the HHs in a brain and the lower row showed the HHs in a 2-dimensional plane.}
\label{fig:subnetworkswithingroup}
\end{figure}

\section{Discussion and conclusions} 

In this study, we proposed a new network dissimilarity, called HH dissimilarity. 
Unlike a binary hole estimated by the ZC algorithm, a HH show all possible paths of edges around a hole, and the contribution of paths to forming the hole is represented by the weight of edges on the paths. 
If an edge belongs to a unique path that forms a hole, its edge weight will be large.  
If an edge belongs to one of many alternative paths as in a module, its edge weight will be small. 
In this way, HHs can extract the substructures of a brain network including holes and modules. 
Moreover, since the HHs can be represented as 
real-valued orthonormal vectors 
we can define the dissimilarity between HHs as well as HH dissimilarity between brain networks easily using vector product. 

Brain networks of different groups may share common substructure as well as have different substructures that make individual and group differences. 
The proposed HH dissimilarity first finds candidates of common substructures between brain networks and estimates the 
over all dissimilarities between candidates. 
The clustering results showed that brain networks of different groups had similar substructures, however, the averaged similarities was much larger than that of brain networks within a group. 

The goal of persistent homology may be to find persistent features 
that last for a long 
duration. 
However, in brain network analysis, it has been applied for finding the change of topology, especially the change of connected components, 
instead of the persistence of topology.
This study suggested a more coherent framework 
to observe, capture, and quantify the change of holes in brain networks. 
Depending on imaging modality and study populations, brain networks may have different characteristics of shapes. 
Therefore, it is necessary to apply proper network measures to brain networks depending on  modality and population. 
The results showed that when the Alzheimer's disease progresses, the hole structure was changed   in metabolic brain networks, and HHs and HH dissimilarity could predict the disease progression.

\section*{Acknowledgements} 
Data used in preparation of this article were obtained from the Alzheimer's Disease Neuroimaging Initiative (ADNI) database (adni.loni.usc.edu). As such, the investigators within the ADNI contributed to the design and implementation of ADNI and/or provided data but did not participate in analysis or writing of this report. A complete listing of ADNI investigators can be found at  \url{http://adni.loni.usc.edu}. 
This work is supported by Basic Science Research Program through the National Research Foundation (NRF) (No.2013R1A1A2064593 and No.2016R1D1A1B03935463), NRF Grant funded by MSIP of Korea (No.2015M3C7A1028926 and No.2017M3C7A1048079), NRF grant funded by the Korean Government (No. 2016R1D1A1A02937497, No.2017R1A5A1015626, and No.2011-0030815),  and NIH grant EB022856.

\section*{References}

\bibliography{leehk}

\begin{thebibliography}{10}
\expandafter\ifx\csname url\endcsname\relax
  \def\url#1{\texttt{#1}}\fi
\expandafter\ifx\csname urlprefix\endcsname\relax\def\urlprefix{URL }\fi
\expandafter\ifx\csname href\endcsname\relax
  \def\href#1#2{#2} \def\path#1{#1}\fi

\bibitem{chung.2009.ipmi}
M.~K. Chung, P.~Bubenik, P.~T. Kim, Persistence diagrams of cortical surface
  data, in: {IPMI} '09: Proceedings of the 21st International Conference on
  Information Processing in Medical Imaging, 2009, pp. 386--397.

\bibitem{lee.2012.ieeemi}
H.~Lee, M.~K. Chung, H.~Kang, B.~N. Kim, D.~S. Lee, Persistent brain network
  homology from the perspective of dendrogram, IEEE T. Med. Imaging 31 (2012)
  2267--2277.

\bibitem{singh.2008.jv}
G.~Singh, F.~Memoli, T.~Ishkhanov, G.~Sapiro, G.~Carlsson, D.~L. Ringach,
  Topological analysis of population activity in visual cortex, J. Vision 8
  (2008) 1--18.

\bibitem{solo.2018.tmi}
V.~Solo, J.~B. Poline, M.~A. Lindquist, S.~L. Simpson, F.~D. Bowman, M.~K.
  Chung, B.~Cassidy, Connectivity in fmri: Blind spots and breakthroughs, IEEE
  Transactions on Medical Imaging 37~(7) (2018) 1537--1550.
\newblock \href {http://dx.doi.org/10.1109/TMI.2018.2831261}
  {\path{doi:10.1109/TMI.2018.2831261}}.

\bibitem{carlsson.2005.ijsm}
G.~Carlsson, A.~Collins, L.~J. Guibas, Persistence barcodes for shapes, Int. J.
  Shape Model 11 (2005) 149--187.

\bibitem{edelsbrunner.2008.cm}
H.~Edelsbrunner, J.~Harer, Persistent homology - a survey, Contemporary
  Mathematics 453 (2008) 257--282.

\bibitem{reininghaus.2015.cvpr}
J.~Reininghaus, S.~Huber, U.~Bauer, R.~Kwitt, A stable multi-scale kernel for
  topological machine learning, in: The IEEE Conference on Computer Vision and
  Pattern Recognition (CVPR), 2015, pp. 4741--4748.

\bibitem{chung.2017.ipmi}
M.~K. Chung, V.~Vilalta, H.~Lee, P.~Rathouz, B.~Lahey, D.~Zald, Exact
  topological inference for paired brain networks via persistent homology, in:
  {IPMI} '17: Proceedings of the 25th International Conference on Information
  Processing in Medical Imaging, 2009.

\bibitem{sporns.2016.arp}
O.~Sporns, R.~F. Betzel, Modular brain networks, Annual Review of Psychology 67
  (2016) 19.1--19.28.

\bibitem{chung.2018.arxiv}
M.~K. Chung, H.~Lee, A.~Gritsenko, A.~DiChristofano, D.~Pluta, H.~Ombao,
  V.~Solo, \href{https://arxiv.org/abs/1809.03878}{Topological brain network
  distances}, arXiv:1809.03878 [stat.AP].
\newline\urlprefix\url{https://arxiv.org/abs/1809.03878}

\bibitem{lee.2014.miccai}
H.~Lee, M.~K. Chung, H.~Kang, D.~S. Lee, Hole detection in metabolic
  connectivity of {A}lzheimer's disease using $k$-{L}aplacian, in: MICCAI, Vol.
  8675, Tokyo, Japan, 2014, pp. 297--304.

\bibitem{lee.2018.isbi}
H.~Lee, M.~K. Chung, H.~Kang, H.~Choi, Y.~K. Kim, D.~S. Lee, Abnormal hole
  detection in brain connectivity by kernel density of persistence diagram and
  hodge laplacian, in: 2018 IEEE 15th International Symposium on Biomedical
  Imaging (ISBI 2018), 2018, pp. 20--23.
\newblock \href {http://dx.doi.org/10.1109/ISBI.2018.8363514}
  {\path{doi:10.1109/ISBI.2018.8363514}}.

\bibitem{petri.2014.jrsi}
G.~Petri, P.~Expert, F.~Turkheimer, R.~Carhart-Harris, D.~Nutt, P.~J. Hellyer,
  F.~Vaccarino, Homological scaffolds of brain functional networks, Journal of
  The Royal Society Interface 11~(101).

\bibitem{sizemore.2018.jcn}
A.~Sizemore, C.~Giusti, A.~Kahn, J.~Vettel, R.~Betzel, D.~Bassett, Cliques and
  cavities in the human connectome, Journal of computational neuroscience 44
  (2018) 115--145.

\bibitem{sporns.2000.cc}
O.~Sporns, G.~Tononi, G.~Edelman,
  \href{http://dx.doi.org/10.1093/cercor/10.2.127}{Theoretical neuroanatomy:
  Relating anatomical and functional connectivity in graphs and cortical
  connection matrices}, Cerebral Cortex 10~(2) (2000) 127--141.
\newblock \href
  {http://arxiv.org/abs//oup/backfile/content_public/journal/cercor/10/2/10.1093_cercor_10.2.127/1/100127.pdf}
  {\path{arXiv:/oup/backfile/content_public/journal/cercor/10/2/10.1093_cercor_10.2.127/1/100127.pdf}}.
\newline\urlprefix\url{http://dx.doi.org/10.1093/cercor/10.2.127}

\bibitem{zomorodian.2005.dcg}
A.~Zomorodian, G.~Carlsson, Computing persistent homology, Discrete Comput.
  Geom. 33 (2005) 249--274.

\bibitem{friedman.1996.astc}
J.~Friedman, Computing betti numbers via combinatorial laplacians, in: Proc.
  28th Ann. ACM Sympos. Theory Comput., 1996, pp. 386--391.

\bibitem{horak.2013.am}
D.~Horak, J.~Jost, Spectra of combinatorial {L}aplace operators on simplicial
  complexes, Advances in Mathematics 244~(0) (2013) 303--336.

\bibitem{lim.2015.arxiv}
L.-H. Lim, Hodge {L}aplacians on graphs, Geometry and Topology in Statistical
  Inference, Proceedings of Symposia in Applied Mathematics 73.
\newblock \href {http://arxiv.org/abs/1507.05379} {\path{arXiv:1507.05379}}.

\bibitem{cohen-steiner.2007.dcg}
D.~Cohen-Steiner, H.~Edelsbrunner, J.~Harer, Stability of persistence diagrams,
  Discrete Comput. Geom. 37 (2007) 103--120.

\bibitem{carlsson.2008.ijcv}
G.~Carlsson, T.~Ishkhanov, V.~de~Silva, A.~Zomorodian, On the local behavior of
  spaces of natural images, International Journal of Computer Vision 76~(1)
  (2008) 1--12.

\bibitem{choi.2018.bbr}
H.~Choi, K.~H. Jin,
  \href{http://www.sciencedirect.com/science/article/pii/S0166432818301013}{Predicting
  cognitive decline with deep learning of brain metabolism and amyloid
  imaging}, Behavioural Brain Research 344 (2018) 103 -- 109.
\newblock \href {http://dx.doi.org/https://doi.org/10.1016/j.bbr.2018.02.017}
  {\path{doi:https://doi.org/10.1016/j.bbr.2018.02.017}}.
\newline\urlprefix\url{http://www.sciencedirect.com/science/article/pii/S0166432818301013}

\bibitem{rolls.2015.ni}
E.~T. Rolls, M.~Joliot, N.~Tzourio-Mazoyer, Implementation of a new
  parcellation of the orbitofrontal cortex in the automated anatomical labeling
  atlas, NeuroImage 122 (2015) 1--5.

\bibitem{coifman.2005.pnas}
R.~R. Coifman, S.~Lafon, A.~B. Lee, M.~Maggioni, F.~Warner, S.~Zucker,
  Geometric diffusions as a tool for harmonic analysis and structure definition
  of data: Diffusion maps, in: Proceedings of the National Academy of Sciences,
  2005, pp. 7426--7431.

\bibitem{edelsbrunner.2009.book}
H.~Edelsbrunner, J.~Harer, Computational Topology: An Introduction, American
  Mathematical Society Press, 2009.

\bibitem{kim.2018.lma}
Y.-J. Kim, W.~Kook, Harmonic cycles for graphs, Linear and Multilinear Algebra
  0~(0) (2018) 1--11.
\newblock \href
  {http://arxiv.org/abs/https://doi.org/10.1080/03081087.2018.1440519}
  {\path{arXiv:https://doi.org/10.1080/03081087.2018.1440519}}.

\bibitem{sanabria-diaz.2013.plosone}
G.~Sanabria-Diaz, E.~Mart\`{i}nez-Montes, L.~Melie-Garcia, {for the
  {A}lzheimer{'}s Disease Neuroimaging Initiative},
  \href{https://doi.org/10.1371/journal.pone.0068860}{Glucose metabolism during
  resting state reveals abnormal brain networks organization in the
  {A}lzheimer{'}s disease and mild cognitive impairment}, {PLOS ONE} 8~(7)
  (2013) 1--25.
\newline\urlprefix\url{https://doi.org/10.1371/journal.pone.0068860}

\bibitem{batagelj.2003.gds}
V.~Batagelj, A.~Mrvar, Pajek - analysis and visualization of large networks,
  in: Graph Drawing Software, Springer, 2003, pp. 77--103.

\end{thebibliography}

\end{document}